\newcommand{\@chapapp}{\relax}%
\begin{document}
\title{On Post-Processing the Results of Quantum Optimizers }
%
%\titlerunning{Abbreviated paper title}
% If the paper title is too long for the running head, you can set
% an abbreviated paper title here
%
\author{First Author\inst{1}\orcidID{0000-1111-2222-3333} \and
Second Author\inst{2,3}\orcidID{1111-2222-3333-4444} \and
Third Author\inst{3}\orcidID{2222--3333-4444-5555}}

\author{Ajinkya Borle\textsuperscript{(\Letter)} \and
Josh McCarter}
\authorrunning{Borle and McCarter}
\institute{CSEE Department,University of Maryland Baltimore County, Baltimore MD 21250
\email{\{aborle1,jmccar1\}@umbc.edu}\\
}
\maketitle              % typeset the header of the contribution
\begin{abstract}
The use of quantum computing for applications involving optimization has been regarded as one of the areas it may prove to be advantageous (against classical computation). To further improve the quality of the solutions, post-processing techniques are often used on the results of quantum optimization. One such recent approach is the Multi Qubit Correction (MQC) algorithm by Dorband. In this paper, we will discuss and analyze the strengths and weaknesses of this technique. Then based on our discussion, we perform an experiment on how pairing heuristics on the input of MQC can affect the results of a quantum optimizer and a comparison between MQC and the built-in optimization method that D-wave Systems offers. Among our results, we are able to show that the built-in post-processing rarely beats MQC in our tests. We hope that by using the ideas and insights presented in this paper, researchers and developers will be able to make a more informed decision on what kind of post-processing methods to use for their quantum optimization needs.

\keywords{quantum optimization  \and quantum annealing \and approximation \and evolutionary algorithm \and D-wave \and QAOA}
\end{abstract}
\section{Introduction}
We are entering the era of Noisy Intermediate Scale Quantum (NISQ) devices \cite{preskill2018quantum}, as of the time of writing this paper. But these devices may not be fault-tolerant to run the traditional quantum algorithms (like Shor's or Grover's Algorithm \cite{shor1994algorithms,grover1997quantum}) for doing computation on a useful scale. However, applications such as quantum chemistry \cite{peruzzo2014variational}, sampling \cite{adachi2015application} and optimization \cite{farhi2014quantum,neukart2017traffic} among others, are the first to make use of such devices.

It is important to understand that when we talk about NISQ devices, we are also considering quantum annealers such as the D-wave 2000Q to be in that category. This is because, as Preskill points out in his work \cite{preskill2018quantum}, the quantum annealer is a noisy implementation of adiabatic quantum computing. While there is still controversy about the lack of conclusive evidence of a quantum speedup, research has highlighted areas of promise \cite{adachi2015application,o2018nonnegative,o2018approach,harris2018phase}.

For the scope of this work, our domain of interest is quantum optimization. In particular, it is the post-processing that is applied on the results returned by  quantum optimizers. The basic hypothesis is that \cite{dorband2018method}, even if quantum devices cannot reach the global minimum for a hard problem, it can still reach the neighborhood of such a solution. Thus, post-processing the output of quantum solvers (irrespective of the type) can be helpful to find an improved solution at the very least, if not the best one.

Our aim in this paper is to study some prominent post-processing techniques used in the field, with a special focus on Multi Qubit Correction (MQC) \cite{dorband2018method}. Then based on our study, we perform some experiments on it. Section 2 covers the required background information. Section 3 deals with the review of some of the most prominent post-processing techniques. In section 4, we discuss and theoretically analyze MQC. Based on what we learn in section 4, we perform experiments in section 5 on how the order of inputs given to MQC can affect the final result of the optimization. We lay out future work based on those empirical results in section 6. Finally, we end with concluding remarks in section 7.

The techniques discussed and proposed in this paper will focus around results from the D-wave quantum annealer. However, they are not limited to the D-wave (except for reverse annealing) and can be applied on results of Quantum Approximation Optimization Algorithm (QAOA) \cite{farhi2014quantum}, Coherent Ising Machines (CIM) \cite{inagaki2016coherent}, Quantum Inspired Digital Annealers (QIDA) \cite{aramon2018physics} and even various classical optimizers based on the Ising model.

\section{Background}
In this section, we shall lay out the terms and concepts we will use in the rest of the paper.
\subsection{The Ising Model}
The Ising Model is a mathematical model originally used in statistical mechanics for ferromagnetism \cite{gallavotti2013statistical}. However, it has applicability beyond statistical mechanics, especially for modeling NP-Hard problems. Quantum Annealers \cite{kadowaki1998quantum} and gate-based optimization approaches like QAOA \cite{farhi2014quantum} are also based on the Ising Model. The two dimensional Ising model, on which the D-wave 2000Q is based, has the following objective function :
\begin{align}
    F(h,J) = \sum_{a}h_a\sigma_a + \sum_{a<b}J_{ab}\sigma_{a}\sigma_{b}\label{eq:ising}
\end{align}

where $\sigma_{a}$ is a binary variable which can take either $-1$ or $+1$,  $h_a$ and $J_{ab}$  \cite{dorband2016stochastic} are the coefficients for the linear and quadratic terms respectively. The $\sigma_{a}$'s binary variables are mapped to qubits in a quantum computer. The quantum optimizer's job is to return the set of values for $\sigma_{a}$s that would correspond to the smallest value of $F(h,J)$ (or the largest value for QAOA).

\subsection{Quantum Annealing}
The Quantum Annealing process uses quantum mechanics to search the energy landscape of the Ising model to find the ground state configuration of $\sigma_a$ variables from Eqn(\ref{eq:ising}). The $\sigma_a$ variables are called as qubits spins in quantum annealing, essentially being quantum bits.

The process begins with the qubits in equal quantum superposition: which means that at this stage, all the potential qubit configurations have an equal probability of being measured. It then attempts to find the lowest energy configuration of the objective function $F(h,J)$ by varying the tunneling field strength (and gradually reducing it to 0), generating the stochastic distribution proportional to $e^{-F(h,J)}$. Under some conditions, these devices can sample from  thermal distribution proportional to $e^{-\beta F(h,J)}$ where $\beta$ is the inverse temperature parameter \cite{tanaka2017quantum}. Here the tunneling field plays a similar role to that of the `temparature' parameter in simulated annealing. It is not yet clear if the D-wave quantum annealer adheres to the adiabatic principle completely (mostly due to technical constraints and noise). A more detailed description \cite{farhi2000quantum} can be found in the book by Tanaka et al. \cite{tanaka2017quantum}. 

For our purposes however, we are interested mainly in the results that the quantum annealer provides us. From an accuracy perspective, a quantum annealer is essentially trying to take samples of a Boltzmann distribution whose energy is the Ising objective function  \cite{adachi2015application}
\begin{align}
    P(\sigma) &= \frac{1}{Z} e^{-F(h,J)}\label{eq:prob_boltzmann_ising}\\
   \text{where } Z &= \textit{exp}\big(\sum_{\{\sigma_{a}\}}\big[ \sum_{a}h_a\sigma_a + \sum_{a<b}J_{ab}\sigma_a\sigma_b \big] \big)\label{eq:prob_partition_ising}
\end{align}
Eqn(\ref{eq:prob_boltzmann_ising}) tells us that the qubit configuration of the global minimum would have the highest probability to be sampled. Because the quantum annealer is a probabilistic machine, we run it multiple times to get a set of solutions. The run (a configuration of values for the variables in the problems) with the lowest energy is taken as the final result. Alternatively, these runs can be fed into a post-processing method in the hopes of getting a better result.
%copying stop
\subsection{Quantum Approximate Optimization Algorithm (QAOA)}
This is a hybrid quantum/classical algorithm that gained popularity due to its simple approach \cite{farhi2014quantum}, can run well on NISQ devices \cite{preskill2018quantum} and is a candidate for being able to achieve `quantum supremacy' \cite{farhi2016quantum}, a term coined for a situation where a quantum computer would be able to do a task on a scale that's not practical on classical computers \cite{preskill2012quantum}.

For the basics of gate-based quantum computation, we recommend the text by Nielsen and Chuang \cite{nielsen2002quantum}. The QAOA algorithm requires the  Hermitian matrix $C$, a Hamiltonian of of Eqn(\ref{eq:ising}) \cite{amin2018quantum} and another Hermitian matrix $B$ represented as
\begin{align}
    C &= \sum_{a}h_a\sigma_{a}^{(z)} + \sum_{a<b}J_{ab}\sigma_{a}^{(z)}\otimes\sigma_{b}^{(z)}\label{eq:ising_hamil}\\
    B &= \sum_{a} \sigma_{a}^{(x)}\\
    \text{where }\sigma^{(z)} &=
    \begin{pmatrix}
    1 & 0\\
    0 & -1
    \end{pmatrix} \text{ and } \sigma^{(x)} =
    \begin{pmatrix}
    0 & 1\\
    1 & 0
    \end{pmatrix}
\end{align}
Each $\sigma_{a}^{(z)}$ and $\sigma_{a}^{(x)}$ can be represented as 
\begin{align}
    \sigma_{a}^{(z)} = (\otimes_{i=1}^{a-1}I) \otimes (\sigma^{(z)}) \otimes (\otimes_{i=a+1}^{N}I)\label{eq:sigmaz}\\
    \sigma_{a}^{(x)} = (\otimes_{i=1}^{a-1}I) \otimes (\sigma^{(x)}) \otimes (\otimes_{i=a+1}^{N}I)\label{eq:sigmax}
\end{align}
The algorithm is based on the Hamiltonian simulation of $B$ and $C$. The strength of this algorithm is that it uses a set of $2p$ angles in order to optimize (maximize) the objective function (where $p$ can be chosen to be far lesser than $N$). The classical part of the algorithm is to search those angles. These can be done by various techniques like grid search, bayesian optimization etc. For more detailed information, we recommend Farhi et. al's paper \cite{farhi2014quantum}.
 
\begin{algorithm}
\caption{Quantum Approximate Optimization Algorithm}\label{alg:qaoa}
\begin{algorithmic}[1]
\Procedure{MAIN}{$B,C,p$}\Comment{The main routine of the algorithm}
\State $\beta \gets \{\emptyset\}, \gamma \gets \{\emptyset\},\beta_1 \gets \emptyset,\gamma_1 \gets \emptyset, config\_res \gets \{\emptyset\}, res \gets \{\emptyset\}$

\State Pick at random $\beta \in [0,\pi]^p,\gamma_{i} \in [0,2\pi]^p$
\While{$(\beta,\gamma)$ can be further optimized, or a limit is reached}
    \For{\texttt{a fixed number of iterations}}
        \State $res \gets res$ $\cup$ QAOA($B,C,\beta,\gamma,p$)
    \EndFor
    \State Tally $res$ and put most occurring result $r \in res$ into $config\_res$
    \State Based on the $config\_res$, pick new $2p$ angles $(\beta,\gamma)$ by classical optimization
\EndWhile
\State \textbf{return} $config\_res$\Comment{This is the approx. soln}
\EndProcedure
\Procedure{QAOA}{$B,C,\beta,\gamma,t$}\Comment{The quantum procedure}
\State Initialize $N$ qubits, $\ket{\psi} \gets \ket{0}^{\otimes N}$
\State Apply Hadamard transform, $\ket{\psi} = 1/\sqrt{2^N}(\ket{0} + \ket{1})^{\otimes N}$
\State $j \gets 1$
\While{$j \leq t$}\Comment{Apply uptil $\beta_t,\gamma_t$}
\State $\ket{\psi} \gets e^{-i\gamma_{j}C}\ket{\psi}$
\State $\ket{\psi} \gets e^{-i\beta_{j}B}\ket{\psi}$
\State {$j \gets j + 1$}
\EndWhile
\State Measure $\ket{\psi}$ in standard basis and store in a classical register $o$
\State \textbf{return} $o$
\EndProcedure
\end{algorithmic}
\end{algorithm}
\section{Post-processing techniques : a review}
In this section we shall review a few prominent post-processing techniques. This will help the theoretical analysis in the next section.
\subsection{Built-in Optimization Postprocessing}
The D-wave developer guide offers a optimization postprocessing technique \cite{dwavepp} that is based on heuristics to decompose the problem graphs \cite{markowitz1957elimination} (either the native hardware graph or a logical graph) into several low treewidth graphs. Then each of these subgraphs are solved locally based on belief propagation on junction trees  \cite{jensen1996bayesian} in the hopes of getting a better solution.
\begin{algorithm}
\caption{Built-in post processing}\label{alg:built_in_pp}
\begin{algorithmic}[1]
\Procedure{MAIN}{$R,G$}\Comment{R is the set of results, G is the graph}
\State Initialize $R' \gets \{\emptyset\}$
\State Decompose graph $G$ into a set of subgraphs $G'$
\For {each run r in R}
    \For {each subgraph $g$ in $G'$}
        \State Use belief propagation on junction trees to optimize $r$ for subgraph $g$
    \EndFor 
    \State Put locally optimized $r$ in the solution set $R'$
\EndFor
\State \textbf{return} $R'$
\EndProcedure
\end{algorithmic}
\end{algorithm}

It is also important to mention that the D-wave API offers a sampling post-processing technique \cite{dwavepp} to create an approximate Boltzmann distribution for a user defined inverse temperature parameter $\beta$. However, since the focus of this work is enhancing quantum optimization, we are not going into the details of such a technique.
\subsection{Multi Qubit Correction (MQC)}
\begin{figure}
\includegraphics[scale=0.30]{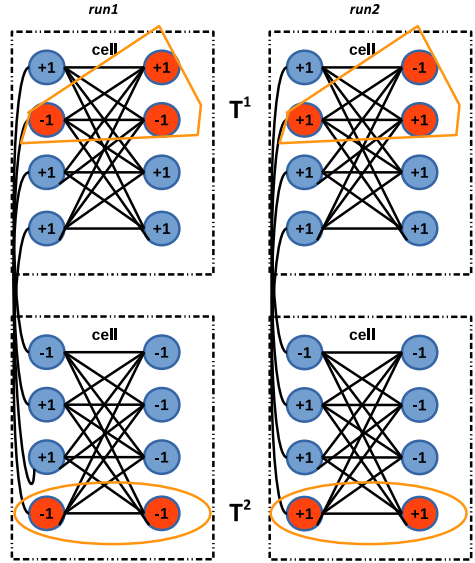}
\centering
\caption{Example of Tunnels across two runs. We can see tunnels $T^1$ and $T^2$ formed in the above configurations for the corresponding qubit values that don't agree (in red). For the problem graph in the example, we take two cells from the chimera graph arranged vertically.} \label{fig:mqc_tunnel}
\end{figure}
\begin{algorithm}
\caption{Multi Qubit Correction (MQC)}\label{alg:mqc}
\begin{algorithmic}[1]
\Procedure{MAIN}{$R,G$}\Comment{R is the set of results, G is the graph}
\While{$|R| \neq 1$}
    \State $R' \gets \{\emptyset\}$
    \State Pair runs from $R$ in put in set $P$
    \For {each run pair $(r,s)$ in P}
        \State $R' \gets R' \cup $ MQC($r,s,G$)
    \EndFor
    \State $R \gets R'$
\EndWhile
\State \textbf{return} $R$
\EndProcedure
\Procedure{MQC}{$run1,run2,G$}
\State Initialize $ans\_run \gets \{\emptyset\}$
\State Create an index set $S$ for qubits that have the same value across $run1$ and $run2$ %$S = \{i:r1_i = r2_i,r1_i\in run1, r2_i\in run2,1\leq i\leq|V|\}$
\State Create an index set $D$ for qubits that disagree in value across $run1$ and $run2$%$D = \{i: r1_i \neq r2_i,r1_i\in run1, r2_i\in run2,1\leq i\leq|V|\}$
\State $ans\_run \gets ans\_run \cup \{r_i:r_i\in run1 \vee r_i \in run2, i \in S\}$
\State Find connected components in $D$ to create subsets  $T^1,T^2...T^k$ %such that $T^1 \cap T^2 \cap...T^k= \emptyset$ and $T^1 \cup T^2 \cup...T^k= D$
\State $i \gets 1$
\While{$i \leq k$}
    \State Compare energy contribution of $T^i$ w.r.t run1 and run2 by Eqn \ref{eq:ising_contrib}
    \State Select configuration of $T^i$ that has lower energy and add in $ans\_run$
    \State $i \gets i + 1$
\EndWhile
\State \textbf{return} $ans\_run$
\EndProcedure
\end{algorithmic}
\end{algorithm}

In 2018, the Multi Qubit Correction (MQC) technique by Dorband \cite{dorband2018method} was proposed as a simple,fast and effective technique to improve upon the results of an Ising problem optimizer (focused on, but not limited to the D-wave quantum annealers). The MQC technique is especially beneficial when an optimizer returns solutions near the actual global minimum or has components of the global minimum.

The technique works by pairing runs $run1$ and $run2$. It then makes a set of indices of qubits $D$ that have different values across the two runs and a set $S$ for those that have the same values. Then within the set $D$, we find all $T^i$ subsets of qubits (also known as tunnels) that are transitively connected to each other, i.e $D = \{T^1,T^2,...T^k\}$ where $k$ is the total number of connected components. Figure 1 shows an example of tunnels being formed for a pair of runs. In other words, we find all the connected components in $D$. No qubit from a subset $T^i$ is connected to a qubit in $T^j$ ($i \neq j$). Once that is done, we look at the relative energy each $T^i$ contributes to the global energy. This is done by
\begin{align}
    I^{i}_{run1}(h,J) = \sum_{a\in T^{i}}h_a\sigma_a^{(run1)} + \sum_{a \in T^i}\sum_{b \in S}J_{ab}\sigma_{a}^{(run1)}\sigma_{b}\label{eq:ising_contrib}
\end{align}
Eqn(\ref{eq:ising_contrib}) is the energy contribution by the tunnel $T^i$ for the configuration of $run1$. A Similar equation can be created for the energy contribution of $run2$ for the same tunnel. We then select the configuration (from the two paired runs) that contributes the lowest of the two. This is done for each tunnel.

The $N$ runs obtained from the quantum optimizer are paired for applying MQC. The result of step above are $N/2$ new `runs' that go through the same procedure. In this manner, $N$ runs are reduced down to 1 run.

\subsection{High Precision Enhancement (HPE)}
\begin{algorithm}
\caption{High Precision Enhancement (HPE)}\label{alg:hpe}
\begin{algorithmic}[1]
\Procedure{MAIN}{$h,J,L,n$}\Comment{$h$ and $J$ are Ising coefficients, $L$ is the set of scaling factors, $n$ is the number of runs for the set $R^l$ where $l\in L$}
\State Intialize $F\_set \gets \{\emptyset\}, final\_ans \gets \emptyset$
\For {$l \in L$}
    \State Scale $F(h,J)$ with $l$
    \State Run the scaled problem on a quantum optimizer n times and store runs in $R^l$
\EndFor

\State $i \gets 1$
\While {$i \leq n$}
    \State $T \gets \{\emptyset\}$
    \For {$l \in L$}
        \State Pick a run $r^l$ from $R^l$ and store in $T$ 
    \EndFor
        \State Apply MQC on the runs in $T$ to reduce it to 1 run, store in $F\_set$
        \State $i \gets i + 1$
\EndWhile
\State Apply MQC on $F\_set$ to reduce it to 1 run, store in $final\_ans$
\State \textbf{return} $final\_ans$
\EndProcedure
\end{algorithmic}
\end{algorithm}
Another work by Dorband in 2018 proposed a specialized post-processing technique for Ising Problems that have higher precision requirements than what the quantum hardware supports  \cite{dorband2018extending}. This is done by scaling the problem $F(h,J)$ by $l,l\in L$ where $L$ is a set of scaling factors. This is because, although the various scaled problems have the same solution configuration for their global minimums, the quantum hardware would treat them as different problems due to the limitations of precision for coefficients.

Each of the $|L|$ scaled problems are then fed into the quantum optimizer to obtain the set $R^l$ (results from the optimizers) where $l \in L$ . Then we pick one run for each of the $|L|$ problems (from the $R^l$s) and reduce it to 1 run using MQC. We repeat this procedure till we cover all $|L| \times n$ runs (where $n = |R^l|, \forall l \in L$). Finally, when we are left with the aggregated $n$ runs (from the previous $|L|$ iterations), we use MQC to systematically aggregate it into the final run, which is our result. For more details, we recommend the original work to the readers  \cite{dorband2018extending}.

In this way, the classical postprocessing overcomes the lack of precision in the quantum optimizer, without having to work on all the variables (qubits). It is also important to note that HPE is a different type of post-processing technique than the others discussed in this paper. This is because it serves a different purpose than the other techniques and can be seen as a meta-technique that can incorporate other post-processing methods as a subroutine.

\subsection{Other post-processing techniques}
\subsubsection{Reverse Annealing: } This is a feature of the D-wave annealers introduced in 2017 with the D-wave 2000Q \cite{revanneal2017,chancellor2017modernizing,ohkuwa2018reverse}, that allows for the usage of a previous run of the D-wave as a starting point for new runs. In its initial form, it was first proposed by Perdomo-Ortiz et al. \cite{perdomo2011study}. The utility value of doing such a thing is based on the conjecture that even if the global minimum has not been found by the quantum annealer, the annealer's solution can act as a good starting point to search for better solutions in the neighborhood.

Instead of starting off with equally weighted qubit variables, we start of with the solution and work backwards to a mid-anneal superposition. After that the forward anneal starts again in order to search for another (hopefully better) solution. The search for a suitable mid-aneal superposition is controlled by the \textit{reversal distance} parameter.

\subsubsection{Sample Persistence: } In 2016, Karimi and Rosenberg proposed a technique to improve the results received from a quantum annealer \cite{karimi2017boosting}. It involves fixing the qubits whose values stay the same across the various runs (within a threshold) when retrieved from the quantum annealer, and then run a subset of the original problem (for qubits that are not fixed). This would require us to modify the subset of the problem such that the $J$ coefficients of the couplings that connect to the fixed variables are added (or subtracted, as per the sign of the fixed variable) to the $h$ values of the qubits on the other end of those couplings (i.e. the ones inside the subset).

The idea behind this approach is the conjecture that it is easier to solve a smaller subset of the problem with a greater chance of success than it is to solve the complete problem in one go. It also assumes that the qubits that show the same configuration across multiple runs are more likely to be the correct values for the global minimum solution and thus, they are fixed. More details about their work can be found in their papers  \cite{karimi2017boosting,karimi2017effective}

\subsubsection{Ochoa et al.'s technique: } Recently, we came to know of the work done by Ochoa et al. \cite{ochoa2019feeding} for improving the sampling done by a quantum annealer. The aim of this approach is to arrive at lower energy samples (or runs) (compared to the set of samples/runs that it begins with) during its polynomial time sampling procedure. In contrast, the MQC procedure follows a greedy descent type of approach. A comparison between these two would be an interesting future work.
\section{Analysis and Discussion of MQC}
\subsection{On the time complexity of MQC}
Although it is not guaranteed that MQC would reach the global minimum, it is still important to consider the amount of time the entire post-processing would require. In the work by Dorband  \cite{dorband2018method}, it was stated that $N$ runs can be aggregated to $1$ run in $O(\text{ceil}(\log N))$ aggregation steps. While this is true, its not the complete time complexity, for which we must analyze the computation within each run.

Let $q$ be the total number of qubits in the problem. Let $D_{max}$ be the largest set of differing qubits that would be encountered when the N runs are being aggregated into 1 run. Each pair would require
\begin{enumerate}
    \item A linear search to see whether the qubits match or differ : $q$ or $|V|$, since $V$ is the set of all the vertices in our graph (this can either be the logical problem graph or the hardware graph like the chimera),
    \item A connected component analysis using DFS to find the transitive connectivity : $O(|V_{Dmax}| + |E_{Dmax}|)$. We can simplify this to $O(|V|+|E|)$ since they are bounded by the total number of vertices and edges. %$\sim |V_{Dmax}| + |E_{Dmax}|$
    \item A check of each tunnel $T^{i}$ for the relative energy it contributes: $O(|V_{T^i}| +|E_{T^i}^{\text{out}}|)$ %$\sim |V_{T^i}| +|E_{T^i}^{\text{out}}|$ 
    (this is for calculating the energy associated with Eqn(\ref{eq:ising_contrib})) $E_{T^i}^{\text{out}}$ is the edge set that connects qubits in $T^i$ with qubits in the $S$ set. We do this for k tunnels. So we can see that :
    \begin{align}
        \begin{split}
            \text{Cost of calculating energy} \sim O(|V_{T^1}|+|V_{T^2}|+... + |V_{T^k}|\\ + |E^{out}_{T^1}|+|E^{out}_{T^2}|+...+|E^{out}_{T^k}|)\\
            \text{or, } O(|V_{D_{max}}| + |E^{out}_{D_{max}}|) \sim O(|V|+|E|)
        \end{split}
    \end{align}
    where $|V_{D_{max}}|$ and $|E^{out}_{D_{max}}|$ is the summation of all the vertices and edges (that connect to set $S$) that are in $D_{max}$.
    %Thus for $t_{max}$, the maximum number of tunnels found during our entire subroutine, the operation count becomes $O(t_{max}(|V_{T^i}| +|E_{T^i}^{\text{out}}))$ %$\sim t_{max}(|V_{T^i}| +|E_{T^i}^{\text{out}})$
\end{enumerate}
Thus each step takes about $O(|V| + |E|)$ operations. 
\begin{align}
    \text{Cost per pair} \sim O(|V|+|E|)  \label{eq:cost_pair}
\end{align}
The total number of pairs to process go down by half in each step. Thus it is
\begin{align}
    \text{Total Number of pairs} = N/2 + N/4 +N/8 ... + 1 \label{eq:tot_pair}
\end{align}
In big O notation, $O(N/2 + N/4 +N/8 ... + 1) \sim O(N)$
Thus the cost of reducing $N$ runs down to 1 run using MQC is
\begin{align}
    \text{Total Cost} \sim O((|V|+|E|)N)
\end{align}
For a sparse graph like the D-wave's Chimera, the complexity will simplify to $O(|V|N)$. However, as the graph approaches full connectivity (i.e. $|E|\rightarrow \binom{|V|}{2} $), the complexity will go up to $O(|V|^2N)$. But it should be noted that MQC will have a harder time with denser graphs and become totally ineffective when the graphs become fully connected (explained in the next section).
\subsection{MQC is ineffective for fully connected graphs}
The MQC technique relies upon two conditions for it to be effective :
\subsubsection{Condition 1: } \textit{At least two tunnels need to be formed for a given pair of runs, $run1$ and $run2$}
\\
\textbf{Condition 2: } \textit{The configuration of qubits that contribute the lowest energy for the tunnels shouldn't be all from $run1$ or $run2$ exclusively}
\\
Hence, if for a pair of runs, only a single tunnel is formed, irrespective of the run pair, then selecting either the tunnel configuration from $run1$ or $run2$ would make no difference as it would be equivalent of selecting the entire configuration of $run1$ or $run2$.

\begin{theorem}
If the graph of the Ising Problem in question is fully connected. Then the MQC algorithm will not be able to optimize on the set of runs $R$ it receives from the Ising solver. 
\end{theorem}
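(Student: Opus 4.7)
The plan is to argue directly from the two conditions for MQC's effectiveness (stated immediately before the theorem) and show that a fully connected graph always violates Condition 1, so the pairwise reduction collapses to simply choosing the lower-energy run of each pair.

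First I would set up notation: fix a pair $(run1, run2) \in R \times R$ and let $D$ be the set of qubit indices on which they disagree and $S$ the set on which they agree. I would dispose of the trivial case $D = \emptyset$ (the two runs are identical and MQC returns either of them unchanged) and then focus on $|D| \geq 1$. The key structural observation is that in a fully connected problem graph, the induced subgraph on $D$ is itself complete, so any two vertices of $D$ are joined by an edge and therefore lie in the same connected component. Consequently the tunnel decomposition found by MQC consists of exactly one tunnel $T^1 = D$, i.e.\ $k = 1$, which is exactly the failure of Condition 1.

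Next I would examine what MQC does with a single tunnel. Since $T^1 = D$ covers every disagreeing qubit, choosing the configuration of $T^1$ from $run1$ in Eqn(\ref{eq:ising_contrib}) reconstructs all of $run1$ (the $S$-coordinates already agree), and similarly choosing from $run2$ reconstructs all of $run2$. Hence the energy comparison $I^{1}_{run1}$ versus $I^{1}_{run2}$ reduces to comparing $F(run1)$ and $F(run2)$ (up to the common contribution from couplings within $S$, which cancels), and MQC simply outputs $\arg\min\{F(run1), F(run2)\}$. So each pairwise aggregation picks the better of the two inputs.

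Finally I would lift this to the outer loop of Algorithm~\ref{alg:mqc}. Since every pairwise step on a fully connected graph returns one of its two inputs verbatim, a straightforward induction on the number of aggregation rounds $\lceil \log N \rceil$ shows that the single surviving run after $R$ is reduced to size one is exactly $\arg\min_{r \in R} F(r)$. In other words, MQC cannot produce any configuration that was not already present in $R$; it merely selects the best among the given runs and performs no genuine optimization, which is precisely the claim of the theorem. The only subtlety worth stating carefully is that the comparison between $I^{1}_{run1}$ and $I^{1}_{run2}$ truly coincides with the comparison between $F(run1)$ and $F(run2)$; this is the one place where I would want to write out the bookkeeping explicitly to make sure the $S$-internal couplings drop out of the decision, but it is not a deep obstacle.
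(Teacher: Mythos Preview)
Your argument is correct and its core is exactly the paper's proof: in a complete graph the induced subgraph on $D$ is itself complete, so there is a single tunnel and Condition~1 fails, which is where the paper stops. You go further by characterising the output as $\arg\min_{r\in R} F(r)$; for the bookkeeping you flag, note that not only the $S$-internal couplings but also the $D$-internal couplings drop out of $I^{1}_{run1}-I^{1}_{run2}$, since for $a,b\in D$ both spins are negated between the two runs and hence the product $\sigma_a\sigma_b$ is unchanged.
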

\begin{proof}
A fully connected graph will have every qubit (or vertex) be connected with every other qubit in the graph. Thus even when we have a set of qubits $D$ that differ across $run1$ and $run2$, we won't be able to find multiple tunnels since there will be a single connected component in the subgraph. In other words, since each qubit is connected with every other qubit in the set of all qubits $S \cup D$ or $V$, the qubits within $D$ are also fully connected. This leaves us with a single tunnel. Since Condition 1 is a necessary condition for MQC to do optimization, this will result in a failure to optimize the runs received.
\end{proof}

\subsection{The result of MQC can depend on how the runs are paired} \label{sec:pair}
In the general case of the Ising problem, the way runs are paired together can affect the final result of the MQC algorithm. In other words, the result of MQC is not independent of the initial pair configuration of the runs retrieve from the Ising solver.

Given a set of tunnels T = $\{T^1_{(1,2)},T^2_{(1,2)}...T^k_{(1,2)}\}$ for two runs $r_1$ and $r_2$, it is important for us to understand that when we optimize for each tunnel (by selecting which energy is lower amongs $r_1$ and $r_2$ for that tunnel), we are essentially doing local optimization.

\textbf{Conjecture} : \textit{In the general case, MQC can produce a different final result if the initial pairing is done differently, for a given set of runs $R$ in the general case Ising problem.}\\
 As we aggregate the runs down to the final result, the possibility of encountering multiple other tunnels being formed over the problem graph (logical or hardware) is high. This can be thought of as a sequence of local optimizations occurring  as and when the tunnels get formed. 

Now if we form the pairs of runs from $R$ in a different way than we do before, we may get a different set of tunnels being formed over the problem graph as the runs get reduced. This opens a possibility of optimizations happening differently, leading up to a different result. The reason for this is because MQC decides the configuration of a tunnel with respect to the contribution to the global energy (i.e with respect to what is outside the tunnel). In other words, the configuration of qubits in $S$ also influences the choices made for each tunnel by MQC. Even if we get to encounter the same kinds of tunnels in the $\log N$ steps, the possibility that they may appear in a different sequence would be equivalent of doing a different sequence of local optimizations. Which opens a possibility of a different final result.

%Proof is okay, but can we write it in a better way
It is important to note that this argument is solely based upon the working of the MQC technique. It does not take into consideration what optimizer is attached to it. There might still exist proofs for the pairing order of the input not affecting the result, when it comes to specific optimizers.

\subsection{Discussion}\label{sec:discussion}
Based on the theoretical analysis of MQC done above, we would like to make a few other comments on it. These points may prove useful for future work.

\subsubsection{MQC relies on a lot of unique samples to be effective:} Because the core of MQC essentially requires choosing between two configurations that are mirror opposites for each tunnel, it is safe to say that we miss out on a more nuanced approach for optimizing these tunnels. Thus, because of the simplicity of the technique and the fact that solutions don't get worse as they are processed \cite{dorband2018method}, more sampled runs would mean more opportunities for improvements of the result.

\subsubsection{Sample Persistence and MQC:} Both MQC and Sample Persistence work on two common principles : (i) the concept of fixed variables/qubits based on which qubits have the same values across runs and (ii) the qubits that have differing values across runs form `tunnels', that need to be optimized. The difference arises in how they treat the set of runs received from the quantum optimizer, and how they treat the tunnels. Where MQC is a purely classical technique, the Sample Persistence method believes in using the quantum annealer multiple times, in order to resolve the tunnels.  

Thus, in theory one can expect a better solution in the case of Sample Persistence, since it does more than just compare between two configurations of a set of qubits. Maybe its even possible that it is more effective than MQC for a smaller number of samples. However, Sample Persistence is a highly parameterized technique \cite{karimi2017boosting}, this brings about a different set of problems as different parameters need to be tested out in order for the technique to be effective. A thorough empirical comparison between MQC and Sample Persistence would be beneficial for us to assess which of these techniques is better. Unfortunately, this falls outside the scope of this paper but we would like to suggest it as a future work.

Even if Sample Persistence turns out to be more effective than MQC, a case can still be made for MQC because it is a polynomial time algorithm that requires classical resources. Sample Persistence requires additional calls to be made to the quantum optimizer. Thus, in the case where cost of classical resources is cheaper than the cost of accessing quantum resources (in terms of monetary value), it may be economical to use MQC over Sample persistence.

\subsubsection{Sample persistence cannot be used for HPE:} Despite the advantages that Sample Persistence may have over MQC, it cannot be used as a subroutine for HPE.  This is because HPE depends on a post-processing that requires a higher precision capability than the quantum optimizer that processes the result. This is possible (and convenient) by using classical computation for the post-processing.

\section{Experiment}
Based on the analysis and discussion in the previous section, we want to empirically observe how pairing schemes might affect MQC's final result. By using the information of section \ref{sec:pair}, we can choose to consciously pair the runs in a particular manner. For the purposes of our experiment, we will use two heuristics to pair the input runs:
\begin{enumerate}
    \item On the basis of similar energy (Rank Ordering)\label{list:heu1}
    \item On the basis of difference in qubit values (Maximum Difference)\label{list:heu2}
\end{enumerate}

The inspiration for the first heuristic is the evolutionary computational approach  \cite{greenwood2001finding}, where the fitness of the individuals (or runs) inside the population (or the set of runs) is evaluated and the best-fit individuals are selected for reproduction (or form an input for MQC in our case). 

The logic behind the second heuristic is based on the property of MQC that the output cannot be worse than both runs in the input pair. Hence, we pair runs in a way that are the most different from each other, based on the values of qubits. In this way, we hope to extract useful tunnels that may help improve the results of MQC. It should be noted however that this heuristic will cost an additional $O(|V|N^2)$ (${N \choose 2}$ comparisons and $|V|$ per comparison) for each of the $N$ runs.

Another objective of this experiment is to compare MQC with the built-in post-processing technique that the D-wave API offers. This has not been done in any of the related works till date.
\subsection{Experiment setup}
For this experiment, we create a set of 50 different problems based on the Ising objective function for the on-chip graph of a D-wave 2000Q machine. This is not the lower noise machine released to the public in May of 2019 \cite{dwave25lower}. We used the OCEAN SDK with Python 2.7 for this task. Each one of our 50 problems utilizes all the 2038 available qubits of our D-wave solver: \texttt{DW\_2000Q\_2\_1}. The coefficients are created by random uniform sampling (seeded to 316) in the range of $[-2,2]$ for $h$ coefficients and $[-1,1]$ for $J$s. It should be noted that random chimera graph problems like these may not be considered as problems against which quantum speedup could be achieved \cite{katzgraber2014glassy}. However, these experiments are about studying the behavior of MQC and its variants on the results of quantum annealing.

The above problems are annealed in three modes : (a) without any post-processing, (b) sampling post-processing mode and (c) optimization post-processing mode. The reason for using the sampling post-processing is to create a more diverse range of solutions in terms of configuration and energy. This is done to see how the various versions of MQC perform on a set of runs that are not extremely similar to each other.

The results from (a) and (b) are then run through standard MQC, MQC with heuristic \ref{list:heu1}(Rank Ordering) and MQC with heuristic \ref{list:heu2} (Maximum Difference).

Each of above 3 annealing operations is done for obtaining results for 1000 and 2000 runs of the machine. Each run has an anneal time of $20 \mu s$. 
\subsection{Results and discussion}
\begin{table}[ht]
\centering
\caption{Comparison between MQC done for raw and sampling (smpl) results }
\label{table:1}
 \begin{tabular}{|c|c|c|c|}
 \hline
 \multirow{2}{*}{\textbf{Runs}} & \multicolumn{3}{|c|}{mqc on raw vs mqc on smpl } \\
 \cline{2-4} & raw $=$ smpl & raw $<$ smpl & raw $>$ smpl\\ 
 \hline
 1000 & 31 & 18 & 1\\
 \hline
 2000 & 40 & 8 & 2\\
\hline
\end{tabular}
\end{table}

\begin{table}[ht]
\centering
\caption{Comparison between MQC and built in optimization post-processing }
\label{table:2}
 \begin{tabular}{|c|c|c|c|c|}
 \hline
 \multirow{2}{*}{\textbf{Runs}} & \multirow{2}{*}{\textbf{Mode}} &
 \multicolumn{3}{|c|}{mqc vs pp (built-in opt. pp)}\\
 \cline{3-5} &
 & mqc$=$pp & mqc$<$pp & mqc$>$pp\\
 \hline
 1000 & raw & 4 & 46 & 0\\
 \hline
 1000 & sampling & 5 & 42 & 3\\
\hline
2000 & raw & 9 & 41 & 0\\
\hline
2000 & sampling & 9 & 38 & 3\\
\hline
\end{tabular}
\end{table}

\begin{table}[ht]
\centering
\caption{Comparison between standard MQC and MQC with pairing heuristics}
\label{table:3}
 \begin{tabular}{|c|c|c|c|c|c|c|c|}
 \hline
 \multirow{2}{*}{\textbf{Runs}} & \multirow{2}{*}{\textbf{Mode}} &
 \multicolumn{3}{|c|}{mqc vs rnk (Rank Order MQC)} &
 \multicolumn{3}{|c|}{mqc vs mdf-mqc( Max. Diff MQC)}\\
 \cline{3-8} &
 & mqc$=$rnk & mqc$<$rnk & mqc$>$rnk &
 mqc$=$mdf & mqc$<$mdf & mqc$>$mdf\\
 \hline
 1000 & raw & 48 & 1 & 1 & 48 & 1 & 1\\
 \hline
 1000 & sampling & 34 & 5 & 11 & 27 & 18 & 5\\
\hline
2000 & raw & 50 & 0 & 0 & 47 & 3 & 0\\
\hline
2000 & sampling & 41 & 5 & 4 & 34 & 13 & 3\\
\hline
\end{tabular}
\end{table}

Each value in the tables above is the number of instances or problems for which an energy comparison holds true (as indicated by its respective column).

In our results, there were no problem instances where the raw energies of the D-wave's results were better than the standard MQC's. During our tests, as Table 2 indicates, there were a total of only 6 instances where the (standard) MQC had worse energy than the built-in optimization post-processing. All of these 6 instances were when the sampling mode was used to generate the runs. While in the raw mode, there was no instance where a better solution was derived from built-in post processing over MQC. However, the amount of instances for which MQC has an advantage over the built-in technique drops as we move from 1000 to 2000 runs. This indicates that MQC is more effective when used for fewer runs, though further testing is required. 

Table 1 shows the comparison between MQC done on raw runs and those obtained from the sampling mode. For the most part, the final results of MQC done on raw inputs is equivalent to the results of MQC done on the sampling mode. This number grows as we move from  1000 to 2000 runs. There are a very few cases where MQC done on the sampling mode got a better energy, which is good since operating MQC on the raw results would save computation time as well. 

From our results in Table 3, we can see that neither ranked order nor max difference heuristics are conclusively better than standard MQC for general use. However, the experiment empirically shows that pairing order of the input can have an effect on the output of MQC. This would also indicate that there exists a pairing order that minimizes the end result the most, and it is not evident that the standard MQC is the best way to do so. The results from the sampling mode are more affected by permutation of inputs than the raw results of the D-wave. This means that the raw results of the D-wave are (a) robust against pairing order and (b) very close to each other. This is a good indication of the quality of solutions that D-wave provides.
It will be interesting to see the behavior of MQC when it is used with optimizers other than the D-wave. The results with the sampling mode indicate that MQC would be more sensitive to the pairing order when it receives dissimilar outputs (in this case, approximating a Boltzmann distribution). However, this sensitivity to pairing order seems to diminish as the number of runs are increased. Further testing is recommended.
\section{Future Work}
A thorough comparison of the Sample Persistence \cite{karimi2017effective} technique, Ochoa et al.'s technique \cite{ochoa2019feeding} and MQC would be useful for the research community. Also, as mentioned in the section above, it may be a good idea to test out MQC vs other techniques over larger number of runs. Finally, the effectiveness of MQC also needs to be compared for different types of optimizers : quantum annealers \cite{dwave25lower}, QAOA \cite{farhi2014quantum}, digital annealers \cite{aramon2018physics}, Coherent Ising Machines \cite{inagaki2016coherent} etc.
\section{Concluding Remarks}
In this paper, we first reviewed the prominent post-processing techniques used on the results of quantum optimization. We then theoretically analyzed and discussed the strengths and weaknesses of the Multi Qubit Correction (MQC) technique by Dorband. It was followed by an experiment where we show how the pairing order could effect the final result of the MQC process. We also show that in most instances of our tests, MQC performs better or at par compared to the built-in post-processing technique for optimization. Finally, we outline possible areas of interesting research work that may hold promise when it comes to quantum optimization.

\subsection{Acknowledgment} We would like to thank John Dorband, Milton Halem and Samuel Lomonaco from UMBC for their feedback and discussion on the topics in this paper. We would also like to thank Nicholas Chancellor from Durham University and 
Helmut Katzgraber from Microsoft for their suggestions and feedback. And finally, a thanks to D-wave Systems for providing us access to their machines.
%
% si

% Bof a ibliography ----
% BibTeX users should specify bibliography style 'splncs04'.
% References will then be sorted and formatted in the correct style.
%
\bibliographystyle{splncs04}
\bibliography{references}
\end{document}